\newcommand{\unity}{\mathbbmss{1}}
\newcommand{\opvec}{\operatorname{vec}}
\newcommand{\R}{\ensuremath{\mathbb R}}
\newcommand{\Q}{\ensuremath{\mathbb Q}}
\newcommand{\C}{\ensuremath{\mathbb C}}
\newcommand{\Z}{\mathbb{Z}}
\newcommand{\bra}[1]{\ensuremath{\langle #1 |}{}}
\newcommand{\ket}[1]{\ensuremath{| #1 \rangle}{}}
\newcommand{\lie}[1]{\langle #1 \rangle}
\newcommand{\Cd}{\ensuremath{{\mathbb C}^{d\times d}}}
\newcommand{\Cdd}{\ensuremath{{\mathbb C}^{d^2\times d^2}}}
\newcommand{\cent}{\mathcal{C}}
\newcommand{\rank}{\mathrm{rank}}
\newcommand{\uu}{\mathfrak{u}}
\newcommand{\su}{\mathfrak{su}}
\newcommand{\SU}{\operatorname{SU}}
\newcommand{\fc}{\mathfrak{c}}
\newcommand{\fg}{\mathfrak{g}}
\newcommand{\fh}{\mathfrak{h}}
\newcommand{\fs}{\mathfrak{s}}
\newcommand{\tr}{\mathrm{Tr}}
\newcommand{\PP}{\ensuremath{\mathcal{P}}}
\newcommand{\QQ}{\ensuremath{\mathcal{Q}}}
\newcommand{\ts}[1]{{#1}^{(2)}}
  \renewcommand{\@upn}[1]{#1} 
\newtheoremstyle{ex}%
  {0pt}%
  {0pt}%
  {}%
  {\parindent}%
  {\itshape}%
  {:}
  {.5em}
  {\thmname{#1}\thmnumber{\@ifnotempty{#1}{ }\@upn{#2}}%
  \thmnote{ {\the\thm@headfont(#3)}}}%
\theoremstyle{ex}  
\newtheorem*{rep@theorem}{\rep@title}
\newcommand{\newreptheorem}[2]{%
\newenvironment{rep#1}[1]{%
 \def\rep@title{#2 {\@upn{\ref{##1}}}}%
 \begin{rep@theorem}}%
 {\end{rep@theorem}}}
\newtheorem{result}{Result}
\newtheorem*{main*}{Main result}
\newtheorem{example}{Example}
\newtheorem{theorem}{Theorem}
\newtheorem{proposition}[theorem]{Proposition}
\newtheorem*{conjecture*}{Conjecture}
\begin{document}

\title{Symmetry criteria for quantum simulability of effective interactions}

\author{Zolt{\'a}n Zimbor{\'a}s}
\email{zimboras@gmail.com}
\affiliation{Department of Computer Science, University College London,
Gower Street, London WC1E 6BT, United Kingdom}

\author{Robert Zeier}
\email{zeier@ch.tum.de}
\affiliation{Department Chemie, Technische Universit{\"a}t M{\"u}nchen,
Lichtenbergstrasse 4, 85747 Garching, Germany}

\author{Thomas \surname{Schulte-Herbr{\"u}ggen}}
\email{tosh@ch.tum.de}
\affiliation{Department Chemie, Technische Universit{\"a}t M{\"u}nchen,
Lichtenbergstrasse 4, 85747 Garching, Germany}

\author{Daniel Burgarth}
\email{dkb3@aber.ac.uk}
\affiliation{Department of Mathematics, Aberystwyth University, Aberystwyth SY23 2BZ, United Kingdom}

\date{October 13, 2015}

\pacs{03.67.Ac, 02.30.Yy, 03.67.Lx}

\begin{abstract}
What can one do with a given tunable quantum device? 
We provide complete symmetry criteria deciding whether some effective target interaction(s) can 
be simulated by a set of given interactions.
Symmetries lead to a better understanding of simulation
and permit a reasoning beyond the limitations of the usual explicit Lie closure.
Conserved quantities induced by symmetries pave the way to
a resource theory for simulability.
On a general level, one can now
decide equality for any pair of compact  Lie algebras just given by their generators
without determining the algebras explicitly. 
Several physical examples are illustrated,  including entanglement invariants, 
the relation to unitary gate membership problems, 
as well as the central-spin model. 
\end{abstract}

\maketitle

\section{Introduction}
Thanks to impressive progress on the experimental side, many small- and medium-scale quantum devices are now 
ready for applications ranging from quantum metrology 
\cite{Huelga98,BDM99,ASL04,GLM11}
to quantum simulation 
\cite{Lloyd96,BCL+02,Wunder09,LWG10,CMLS12}.
With quantum information processing as one of the driving but long-term goals (e.g., 
\cite{Lloyd08,RDN12,CNR+14b}),
one of the pressing questions is, what can one do with these devices \emph{now}? 
This problem clearly falls into the remit of \emph{quantum systems and control engineering}, an area naturally 
receiving increased interest 
\cite{DowMil03,dAll08,WisMil09}
both experimentally and theoretically.

Control theory offers a well-known characterization of the operations that a quantum device is capable of on
\emph{Lie-algebraic} grounds~\cite{JS72,Jurdjevic97,DiHeGAMM08,dAll08,ZS11,KDH12,ZZK14}. 
In this work,
we simplify the question to the {\em Hamiltonian membership problem}
of (finite-dimensional) quantum simulation. 
It amounts to deciding, for a set of {\em given control interactions} $\PP$, whether a set of 
\emph{effective target interactions} $\QQ$ can be simulated---without having to establish controllability 
via nested (and hence tedious) commutator calculations for the so-called Lie closure.
Our results reduce the Hamiltonian membership problem to the straightforward solution
of homogeneous linear equations.

In the setting of the  controlled Schr{\"o}dinger equation~\cite{Sak:1994} 
(taken as a bilinear control system \cite{Elliott09,Jurdjevic97}) 
\begin{equation}\label{control_system}
\frac{d}{dt}U(t) = [-iH_1 + {\sum}_{\nu=2}^{p} -i u_\nu(t) H_\nu  ]\, U(t),
\end{equation}
we ask whether
the given set $\PP:=\{iH_1,\ldots,iH_p\}$ of interactions (which may include a drift term) generates
an effective interaction $iH_{p+1}$ or more generally any interaction from a set $\QQ:=
\{iH_{p+1},\ldots,iH_q\}$ assuming all $H_\nu$ are represented by Hermitian matrices henceforth. 
If so, then for \emph{every} evolution time $\tau>0$
of a {\em simulated interaction $iH_{k}\in \QQ$},
there is a solution $U(t)$ of the {\em simulating system} \eqref{control_system}
for $0\leq t \leq  \theta$ and controls $u_\nu(t)$
such that $\PP$ generates a unitary $U(\theta)=\exp(-i\tau H_{k})$
in the simulation time $\theta$
starting from the identity at $t=0$
\cite{HW:1968,Khaneja01b,WJB02,BCL+02,
VHC02,BDD02,ZGB,BBN05},
\footnote{The simulation time $\theta$ can be infinite in order to also cover peculiar 
cases such as an irrational winding of a torus. 
Note that the Hamiltonian evolution is \emph{not} necessarily simulated continuously 
during a time interval, but we only assume that the correct total evolution 
$\exp(-i\tau H_k)$
is attained after a suitably chosen duration $\theta$ which may depend
on the arbitrary but \emph{fixed} evolution time $\tau$.}.
In this sense, Hamiltonian simulation of a particular Hamiltonian $H_k$ can be considered as an
infinitesimal version of creating a particular unitary gate.
It also
generalizes the
universality (or full controllability) question of whether \emph{all} Hamiltonians
can be simulated
(or equivalently whether \emph{all} unitary gates can be obtained)
\cite{Barenco,RSD+95,BB:2002,BDD02,SchiFuSol01,TR01,SchiPuSol01,Alt02,EGK96,
AA03,PST09,ZS11}. 
In the context of gates, a familiar elementary example is that 
all unitary gates in an $n$-qubit system can be obtained \cite{Barenco}
by combining  local gates  with {\sc cnot} gates.
However, the approach of the pioneering 
age of decomposing every target gate into a sequence of {\sc cnot} and local gates is, in practice,
all too often imprecise or slow. So implementing gates or simulating Hamiltonians
with high fidelity rather asks for optimal control techniques, as explained in a recent
roadmap~\cite{qcontrol-roadmap2015}.
As a precondition, here we step back to the Hamiltonian level and give criteria for simulability
and controllability.

\section{Main idea}

We solve the decision problems of simulability (and controllability) by just analyzing the
symmetries of the Hamiltonians of given setups. We show that this decision requires 
considering both {\em linear and quadratic symmetries}, where linear symmetries of a Hamiltonian
$H$ commute with $H$, while {\em quadratic symmetries} of $H$ are those 
commuting with the tensor square $(iH\otimes\unity+\unity\otimes iH)$.
The term quadratic symmetry is motivated, since the tensor square  
generates $U\otimes U$ just as $iH$ generates the unitary~$U$.

More precisely, our goal is to get a symmetry-based understanding of how a set $\PP$ 
of available interactions can simulate a set $\QQ$  of desired effective quantum interactions 
in the sense that the Lie closures
coincide, i.e. $\lie{\PP} = \lie{\PP{\cup}\QQ}$. We circumvent brute-force 
calculation of the Lie closure not only because
high-order commutators can entail
a significant growth in the appearing matrix entries and 
may lead to instabilities in numerical computations, but first and foremost because it
provides no deeper insight into the problem.
Our symmetry analysis leads to a much more systematic understanding 
of Hamiltonian simulation and quantum system dynamics in general.
It provides a powerful argument to decide under which conditions a desired Hamiltonian can, 
in fact, be simulated, or in turn, which explicit simulations or computations are impossible 
in a given experimental setup.

Let us summarize our line of thought:
As 
short-hand, let the \emph{linear symmetries} of $\PP$ (analogously for any set of matrices) 
be expressed via the {\em commutant} 
$\PP'$ which consists of all matrices $S\in\mathbb C^{d\times d}$ that
commute (i.e., $[S,iH_\nu]=0$) 
with each element $iH_\nu \in \Cd$ of $\PP$ \footnote{In this work, $\Cd$ denotes the set of complex 
$d\times d$ matrices and $\unity_d$ signifies the $d\times d$ identity matrix.}. 
Obviously, for $\QQ$ to be simulable by $\PP$, it is necessary 
that $\QQ$ may not break but rather has to inherit the symmetries of $\PP$, so 
$\dim[\PP'] = \dim[(\PP {\cup} \QQ)']$.
However, a complete symmetry characterization is nontrivial. It rather requires the following two steps:
The first is to introduce \emph{quadratic} symmetries~\cite{ZS11}
as those linear symmetries of the system
artificially doubled by the \emph{tensor square}
$\PP^{\otimes 2}:=\{iH_\nu{\otimes}\unity_d {+} \unity_d{\otimes} iH_\nu\;\text{for}\; \nu\in\{1,\ldots, p\}\}$. 
It defines the 
{\em quadratic symmetries} by its commutant $\ts{\PP}:=(\PP^{\otimes 2})'$.
Secondly, let $\cent$ denote the center~\footnote{The center 
of a set $\mathcal{M}$ of matrices contains all
$M_1\in\mathcal{M}$ that commute (i.e.,  $[M_1,M_2]=0$) with every
$M_2 \in \mathcal{M}$.}
of the commutant $(\PP {\cup} \QQ)'$ 
and consider the central projections 
of $\PP$ and $\PP {\cup} \QQ$
onto $\cent$.
With these stipulations, we summarize our main result:

\begin{main*}[see Result~\ref{res_one}  below]
The given interactions $\PP$ simulate the desired 
interactions $\QQ$ in the sense $\lie{\PP} = \lie{\PP {\cup} \QQ}$ if and only if 
$\PP$ and $\QQ$ share the same \emph{quadratic} symmetries 
(i.e., $\dim[\ts{\PP}] = \dim[\ts{(\PP {\cup} \QQ)}]$) [condition~(A)]
and the
central projections of $\PP$ and $\PP{\cup}\QQ$ onto $\cent$ are of the same rank
[condition~(B)].
\end{main*}
\newcounter{centfoot}
\setcounter{centfoot}{\value{footnote}}

Let us emphasize that our approach goes 
beyond the ubiquitous use of linear symmetries in physics, since
linear symmetries provide only an incomplete picture of
Hamiltonian simulation.
The application of higher symmetries is the key here.
It is interesting to note that essentially only the quadratic symmetries (and no higher ones)
in condition (A) are necessary to characterize the dynamics of a quantum system.
One obtains a complete description together with the auxiliary condition (B).

Some remarks also summarizing known approaches are in order.
The quadratic symmetries are stronger than the linear ones; actually they include them 
and thus condition (A) implies that the linear symmetries also agree.
Example~\ref{ex_one} below illustrates why
matching the linear symmetries does not suffice to ensure simulability.
As shown in a companion paper~\cite{ZZ15b}, one can decide if
a subalgebra 
$\fh\subseteq\fg$ of a compact semisimple Lie algebra $\fg$ actually fulfills $\fh=\fg$
(e.g., $\lie{\PP} = \lie{\PP{\cup}\QQ}$)
 just by analyzing quadratic symmetries.
But Example~\ref{ex_two} elucidates  why condition~(A) alone does not,
in the general compact case, imply simulability. 
Only after fixing the central projections by condition (B) the quadratic symmetries
decide simulability.

On a much more general scale, condition (B) closes the gap to completely
characterizing equality in $\fh\subseteq\fg$ now for
{\em all compact Lie algebras} (generated by skew-Hermitian interactions) beyond the 
semisimple ones of \cite{ZZ15b}. Simplifying 
within the Lie-algebraic frame,
our symmetry approach to decide simulability 
and the membership $\QQ \subseteq\lie{\PP}$
can thus be seen as a major step beyond the well-established Lie-algebra rank 
condition \cite{JS72,Jurdjevic97,dAll08}
and beyond the limited first use of
quadratic symmetries to establish 
full 
controllability in~\cite{ZS11}.

\section{Symmetries}

In this section, we elaborate our method and
establish necessary and sufficient conditions for Hamiltonian simulation 
to arrive at  Result~\ref{res_one} below.
We also describe important properties of linear and quadratic symmetries
and discuss two illustrating examples. Example~\ref{ex_one} highlights the importance
of quadratic symmetries for deciding Hamiltonian simulation and their relevance 
for entanglement invariants. The necessity for the auxiliary condition (B)
is made evident in Example~\ref{ex_two}.

The linear symmetries
 of $\mathcal{M}\subseteq \Cd$ are identified \cite{ZS11} with the commutant $\mathcal{M}'$  given as
\begin{equation*}\label{eq_comm}
\mathcal{M}':=\{S \in \Cd \;\text{s.t.}\;  [S,M]=0 \,\text{ for all }\, M\in\mathcal{M}\}.
\end{equation*}
The commutant includes all
complex multiples of the identity $\unity_d$ and it
forms a vector space of dimension $\dim(\mathcal{M}')$.
A smaller set of matrices typically shows \emph{more} symmetries, i.e.,
for $\mathcal{M}_1 \subseteq \mathcal{M}_2$ one has
$\mathcal{M}_1'\supseteq \mathcal{M}_2'$
and  $\mathcal{M}_1'= \mathcal{M}_2'$ iff $\dim(\mathcal{M}_1')=\dim(\mathcal{M}_2')$.
By Jacobi's identity 
(i.e., $[S,[M_1,M_2]]=[[M_2,S],M_1]+[[S,M_1],M_2]$),
any symmetry $S$ that commutes with both $M_1$ and $M_2$
also commutes with their commutator  $[M_1,M_2]$. 
So, $\mathcal{M}$ and the Lie algebra 
$\lie{\mathcal{M}}$  it generates have the same commutant:
$\mathcal{M}_1'=\mathcal{M}_2'$ if $\lie{\mathcal{M}_1}=\lie{\mathcal{M}_2}$.

In our context, this implies that $iH_{p+1}$ cannot be simulated by $\PP$ unless
$\PP' = (\PP\cup\{iH_{p+1}\})'$, i.e.,
coinciding symmetries are a necessary but not sufficient condition. 
This is because the converse does not hold as the following basic example illustrates:
\begin{example}\label{ex_one}
The pair interaction 
$iH_{\sf zz}:=i Z_1 Z_2$ cannot be simulated
by the local interactions  $\PP=\{i X_1,\allowbreak i Y_1,\allowbreak  i X_2,\allowbreak i Y_2\}$ 
of a two-qubit system \footnote{Here,  
$X_k$ 
denotes the matrix $X:=
\left(\begin{smallmatrix}
0 & 1\\
1 & 0
\end{smallmatrix}\right)$ occurring at position $k$ in
$\unity_2 \otimes \cdots \otimes \unity_2 \otimes X \otimes \unity_2 \otimes \cdots \otimes \unity_2$;
also for the other Pauli matrices
$Y :=
\left(\begin{smallmatrix}
0 & -i\\
i & 0
\end{smallmatrix}\right)$ and
$Z :=
\left(\begin{smallmatrix}
1 & 0\\
0 & -1
\end{smallmatrix}\right)$.} in spite of coinciding (trivial) commutants
$\PP'=(\PP\cup\{iH_{\sf zz}\})'=\C \unity_4$.
\end{example}

Thus, we further discuss \emph{quadratic symmetries} \cite{ZS11} 
defined by the commutant to the tensor-square 
\footnote{
	The interaction $iH{\otimes}\unity_d {+} \unity_d{\otimes}iH$
	generates the unitary
	$\exp(-itH){\otimes}\exp(-itH)=\exp[-t(iH{\otimes}\unity_d {+} \unity_d{\otimes}iH)]$.},
\begin{align*}
\ts{\mathcal{M}} 
&:= (\mathcal{M}^{\otimes 2})' =
\{S \in \Cdd \;\text{such that}\;  \\
&[S,M{\otimes}\unity_d {+} \unity_d{\otimes}M]=0 
\,\text{ for all }\,  M\in\mathcal{M} \subseteq \Cd \}. \nonumber
\end{align*}
The tensor-square commutant always contains (the subspace spanned by) the identity 
$\unity_{d^2}$
and the {\sc swap} or commutation matrix $K_{d,d}$~\footnote{ 
	The $d^2{\times}d^2$ matrix $K_{d,d}$ \cite{ZS11,HJ2,HS81} 
	permutes $A,B \in \C^{d\times d}$ in
	$K_{d,d} (A{\otimes}B) =\allowbreak (B{\otimes}A) K_{d,d}$,
	which implies $K_{d,d}\in \ts{\mathcal{M}}$ as
	$K_{d,d} (M{\otimes}\unity_d {+}\allowbreak \unity_d{\otimes}M)  
	= \allowbreak
	(M{\otimes}\unity_d {+}\allowbreak \unity_d{\otimes}M) K_{d,d}$.}. \nocite{HJ2,HS81}
Also,
the quadratic symmetries
include all linear ones, i.e.,
$S_1 {\otimes}\unity_d {+}\allowbreak \unity_d{\otimes}S_1 \in \ts{\mathcal{M}}$
for $S_1 \in\mathcal{M}'$. 
And by Jacobi's identity~\footnote{
	In this case, Jacobi's identity says that a symmetry in $\ts{\mathcal{M}}$
	commutes with 
	the commutator 
	$[M_1{\otimes}\unity_d\allowbreak {+} \unity_d{\otimes}M_1,\allowbreak\,
	M_2{\otimes}\unity_d\allowbreak {+} \unity_d{\otimes}M_2]
	=\allowbreak [M_1,M_2] {\otimes} \unity_d\allowbreak {+}  \unity_d {\otimes} [M_1,M_2]$
	if it commutes with both $M_i{\otimes}\unity_d {+} \unity_d{\otimes}M_i$ for $M_i {\in} \mathcal{M}$.},
one finds
$\ts{(\mathcal{M}_1)}=\ts{(\mathcal{M}_2)}$ if 
$\lie{\mathcal{M}_1}=\lie{\mathcal{M}_2}$.
As above, in our context this implies that
$iH_{p+1}$ cannot be simulated by $\PP$ unless
$\ts{\PP} = \ts{(\PP\cup\{iH_{p+1}\})}$ holds.

\begin{repexample}{ex_one}[completion]
The relevant tensor-square commutants have different dimensions 
$\dim[\ts{\PP}]=4$ and  $\dim[\ts{(\PP\cup\{iH_{\sf zz}\})}]=2$,
so $iH_{\sf zz}$ cannot be simulated. Naturally,  $\ts{(\PP\cup\{iH_{\sf zz}\})}$
contains 
$\unity_{16}$ 
and the commutation matrix $K_{4,4}$, which is related
to the joint permutation $(1,3)(2,4)$ of tensor components in $\C^{16\times 16}$, while 
$\ts{\PP}$ contains 
two additional quadratic  symmetries
related to the separate permutations $(1,3)$ and $(2,4)$;
see Fig.~\ref{spins}.
Evidently, the local interactions of $\PP$ 
cannot generate entanglement. Hence, a quadratic symmetry in $\ts{\PP}$ has a 
physical interpretation as an entanglement invariant. Indeed, 
the concurrence \cite{Woo:1998} of a two-qubit pure state $\ket{\psi}$ can be defined as
$[\bra{\psi}\bra{\psi}\unity_{16}\allowbreak {-} M_{(1,3)}\allowbreak {-} M_{(2,4)}\allowbreak {+} 
M_{(1,3)(2,4)} \ket{\psi}\ket{\psi}]^{1/2}/2$ 
\cite{MKB04,MB07,OS05,OS12},
where the matrix $M_p$ is defined by the permutation $p$. 
Any quadratic symmetry $S\in\ts{\PP}$ relates to a
degree-two polynomial invariant $\tr[\rho{\otimes}\rho\, S]$ 
in the entries of the density matrix $\rho$ \cite{GRB98}.
\end{repexample}

\begin{figure}[tb]
\includegraphics[]{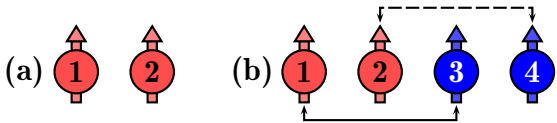}
\caption{(Color online) Visualization of Example~\ref{ex_one}. (a) No linear symmetries besides the identity
exist for both the fully controllable system and  the local interactions. (b) The 
doubled system reveals nontrivial quadratic symmetries 
corresponding to separate permutations $(1,3)$ and $(2,4)$.
\label{spins}}
\vspace{-0.4cm}
\end{figure}

Remarkably, symmetries beyond quadratic ones (i.e., those of the tensor square) are not required
for a necessary and sufficient condition for simulability~\footnote{\protect{
	Reference~\cite{ZS11} showed for controllability, i.e., $\lie{\PP}\subseteq \su(d)$, that
	$\lie{\PP}= \su(d)$ iff $\dim(\ts{\PP})=2$.
	Note $\dim[\ts{\su(d)}]=2$. 
	See \cite{ZZK14, ZZ15b} for similar results with subalgebras of $\su(d)$.}}.
Concerning the tensor-square commutant, we build on two important classification-free 
results of \cite{ZZ15b} for 
compact Lie algebras \cite{Bourb89,Bourb08b} (as
generated by skew-Hermitian matrices $iH_\nu$):
For $\lie{\PP{\cup}\QQ}$ being {\em semisimple} (and compact), Ref.~\cite{ZZ15b} first  shows
that $\lie{\PP}=\lie{\PP{\cup}\QQ}$ holds if and only if
$\dim[\ts{\PP}]=\dim[\ts{(\PP{\cup}\QQ)}]$.
Beyond the semisimple case, any compact Lie algebra $\fg$
can be uniquely decomposed as
$\fg = \fs\oplus \fc$ into its semisimple part $\fs$ and its center $\fc$
(where $\fs:=[\fg,\fg]$ and $[\fg,\fc]=0$
\cite{Note\thecentfoot}).
So Ref.~\cite{ZZ15b} secondly verifies that
the semisimple parts of $\lie{\PP}$ and $\lie{\PP{\cup}\QQ}$
have to agree if
$\dim[\ts{\PP}]=\dim[\ts{(\PP{\cup}\QQ)}]$. 
When generalizing from semisimple to arbitrary compact Lie algebras, 
the equality of the two tensor-square commutants
implies that $\lie{\PP}$ and $\lie{\PP{\cup}\QQ}$ agree--except for
the central elements (commuting with all the other ones). 
These commuting interactions require condition (B) to fix the central projection
thus resulting in the following complete characterization:

\begin{result}\label{res_one}
Consider two sets  
$\PP:=\{iH_1,\ldots,iH_p\}$ and $\QQ:=
\{iH_{p+1},\ldots,iH_q\}$ of (skew-Hermitian) interactions, and
let $C_\alpha$ denote elements of a linear basis spanning
the center $\cent$ of the commutant $(\PP{\cup}\QQ)'$.
For the central projections, define the matrix $T$ by its entries
$T_{\alpha\beta}:=
\tr(C_\alpha^\dagger iH_\beta)$ for $1 \leq \alpha \leq \dim(\cent)$ and $1  \leq \beta \leq q$
as well as
$\widetilde{T}$ by $\widetilde{T}_{\alpha\beta}:=
\tr(C_\alpha^\dagger iH_\beta)$ for $1  \leq \beta \leq p$.
Then $\PP$ simulates $\QQ$ in the sense $\lie{\PP}=\lie{\PP{\cup}\QQ}$, 
if and only if both conditions
(A)~$\dim[\ts{\PP}] = \dim[\ts{(\PP{\cup}\QQ)}]$ and
(B)~$\rank(\widetilde{T})=\rank(T)$ 
are fulfilled. 
\end{result}

Condition (B) of Result~\ref{res_one} is a basic linear-algebra
test solely based on linear symmetries.
Each of the matrices $\widetilde{T}$ and $T$ depends on both
$\PP$ and $\QQ$. 
In Example~\ref{ex_one},
$iH_{\sf zz}$ could not be generated as condition (A) is not satisfied.
Before proving Result~\ref{res_one},  the following example provides a helpful
illustration of condition (B):

\begin{example} \label{ex_two}
In a two-qubit system, consider a dipole coupling combined with a tilted magnetic field, 
i.e.,
$\PP:=\{
i(2Z_1Z_2\allowbreak {-} X_1X_2\allowbreak
{-} Y_1Y_2),\allowbreak\,
i(X_1\allowbreak {-} Y_1\allowbreak 
{+}X_2\allowbreak
{-}Y_2)
\}$.
We investigate whether
a Heisenberg-type interaction of the form
$\QQ_a:=\{i(X_1X_2\allowbreak
{+}Y_1Y_2\allowbreak
{+}Z_1Z_2)\}$
or one particular interaction of pairing 
type  (i.e., $\QQ_b:=
\{i(X_1Z_2\allowbreak
{+}Z_1X_2\allowbreak
{+}Y_1Z_2\allowbreak
{+}Z_1Y_2)\}$)
can be simulated.
Condition (A) is satisfied in both cases as 
the quadratic symmetries of $\PP$, $\PP{\cup}\QQ_a$, and 
$\PP{\cup}\QQ_b$ all coincide (there are $16$ of them).
The three linear symmetries also agree. Moreover, with the mutually commuting
operators
\begin{equation*}\label{ex2_comm}
\left(
\begin{smallmatrix}
1&0&0&0\\
0&1&0&0\\
0&0&1&0\\
0&0&0&1
\end{smallmatrix}
\right),
\left(
\begin{smallmatrix}
1&0&0&0\\
0&0&1&0\\
0&1&0&0\\
0&0&0&1
\end{smallmatrix}
\right), \;\text{and}\;
\left(
\begin{smallmatrix}
\phantom{-}0&\phantom{-}0&\phantom{-}0&1\\
\phantom{-}0&\phantom{-}0&-i&0\\
\phantom{-}0&-i&\phantom{-}0&0\\
-1&\phantom{-}0&\phantom{-}0&0
\end{smallmatrix}
\right)
\end{equation*}
forming a basis of the commutants $\PP' = (\PP{\cup}\QQ_a)'=(\PP{\cup}\QQ_b)'$,
they also span the (three-dimensional) center~$\cent$.
For the central projections, one thus gets the matrices
\begin{equation*}
T_a=
\left(
\begin{smallmatrix}
0&0&\phantom{-}0\phantom{i}\\
0&0&\phantom{-}6i\\
4&0&-4\phantom{i}
\end{smallmatrix}
\right),\,
T_b=
\left(
\begin{smallmatrix}
0&0&0\\
0&0&0\\
4&0&0
\end{smallmatrix}
\right), \;\text{and}\;
\widetilde{T}_a=\widetilde{T}_b=
\left(
\begin{smallmatrix}
0&0\\
0&0\\
4&0
\end{smallmatrix}
\right).
\end{equation*} 
Condition (B) reveals $\rank(\widetilde T_a)\neq \rank(T_a)$, $\rank(\widetilde T_b)= \rank(T_b)$,  so
$\QQ_a$ cannot be simulated by $\PP$, while $\QQ_b$ can.
Note
the isomorphy types of $\lie{\PP}$,  $\lie{\PP{\cup}\QQ_a}$, and $\lie{\PP{\cup}\QQ_b}$
are $\su(2){\oplus}\uu(1)$, $\su(2){\oplus}\uu(1){\oplus}\uu(1)$, and $\su(2){\oplus}\uu(1)$.
\end{example}

\textit{Proof of Result~\ref{res_one}}.---Decompose
the compact Lie algebras $\lie{\PP}$ and $\lie{\PP{\cup}\QQ}$
into their semisimple parts and centers. If condition (A) holds, 
the semisimple parts coincide, 
$\lie{\PP}=\fs + \fc_{\PP}$ and $ \lie{\PP{\cup}\QQ} = \fs + \fc_{\PP{\cup}\QQ}$
with $\fc_{\PP} \subseteq \fc_{\PP{\cup}\QQ}$. 
Take the
unique decomposition $iH_\ell = iH_\ell^\fs +iH_\ell^\fc$ with
$iH_\ell^\fs \in \fs$ and $iH_\ell^\fc \in \fc_{\PP{\cup}\QQ}$. Since
$[iH_n, iH_m] \in \fs$, the real-linear span of 
$\{ iH_1^\fc,\ldots,iH_p^\fc\}$ agrees with $\fc_{\PP}$, while the one of
$\{ iH_1^\fc,\ldots, iH_q^\fc\} $ equals $\fc_{\PP{\cup}\QQ}$. 
It follows that
$\fc_{\PP}=\fc_{\PP{\cup}\QQ}$  iff the dimensions of 
the two real-linear spans agree; this in turn is equivalent to the dimensions of
the two complex-linear spans being equal because all relevant Lie algebras
are compact (see Corollary~1 of Theorem~1 in Chapter~IX, Sec.~3.3 of \cite{Bourb08b}).

The center $\fc$ of a compact 
Lie algebra $\fg$ lies within the center $\cent$ of its matrix commutant $\fg'$:
$[c,g]=0$ for $c\in\fc$, $g \in \fg$ implies
$\fc \subseteq \fg'$; likewise
$[s,c]=0$ for $s \in \fg'$ shows that $\fc \subseteq \cent$.
Given the basis $\{ C_\alpha \}$ of $\cent$, introduce its dual
basis $\{ B_\alpha \}$ with respect to the Hilbert-Schmidt scalar product via
$\tr(C_\alpha^\dagger B_\beta)=\delta_{\alpha, \beta}$. So
any $C \in \cent$ can  be written  as 
$C = \sum_\alpha \tr(C_\alpha^\dagger C)\, B_\alpha$. 
Define the matrix $K$ entrywise by
$K_{\alpha\beta}:=
\tr(C_\alpha^\dagger iH_\beta^\fc)$ for $1 \leq \alpha \leq \dim(\cent)$ and $1  \leq \beta \leq q$, and similarly 
$\widetilde{K}$ by $\widetilde{K}_{\alpha\beta}:=
\tr(C_\alpha^\dagger iH_\beta^\fc)$ for $1  \leq \beta \leq p$. Hence
the dimension of $\fc_{\PP{\cup}\QQ}$ agrees with the rank of $K$, and the 
dimension of  $\fc_{\PP}$ equals the rank of $\widetilde{K}$.

Now, for any $b \in \fs$, there are two elements 
$ b_1, b_2 \in \fs$ with $b = [b_1, b_2]$
(Chapter~I, Sec.~6.4, Proposition~5  of \cite{Bourb89}).
Thus $\tr(C_\alpha b) = \tr(C_\alpha [b_1, b_2]) =\allowbreak
\tr(C_\alpha b_1 b_2)\allowbreak -\tr(C_\alpha b_2 b_1) =\allowbreak
\tr(C_\alpha b_1 b_2)\allowbreak -\tr( b_2 C_\alpha b_1) =\allowbreak
\tr(C_\alpha b_1 b_2)\allowbreak -\tr( C_\alpha b_1 b_2)=0
$, where the third equality follows as $C_\alpha \in \cent$ and $b_2$
commute, and cyclic permutations in the trace imply the fourth equality.
Moreover, $\tr(C_\alpha^\dagger b)= - \tr(C_\alpha^\dagger b^\dagger) = 
- \overline{\tr(C_\alpha b)}=0$.
Hence, $\tr(C_\alpha^\dagger iH_\beta)= \tr(C_\alpha^\dagger iH_\beta^\fc)$, which implies $T= K$
and $\widetilde{T}= \widetilde{K}$. In summary, 
$\fc_{\PP}=\fc_{\PP{\cup}\QQ}$  iff the ranks of
$T$ and $\widetilde{T}$ agree,
which proves Result~\ref{res_one}. $\hfill\square$

\section{Algorithmics and beyond}
Both linear and quadratic symmetries 
can readily be computed by standard linear algebra:
Linear symmetries $S\in \Cd$ are determined by the commutant
and can be obtained by solving the linear equations
$(\unity_d {\otimes} M {-} M^{t} {\otimes} \unity_d) \opvec(S) = 0$
jointly for all 
$M\in \mathcal{M} \subseteq \Cd$  \cite{ZS11,HS81}. Here,
$\opvec(S)$ is a column vector of length $d^2$ stacking
all columns of $S$ \cite{HJ2}. The dimension of the solution  
is 
$d^2{-}r$ where $r$ denotes the rank of the matrix formed by vertically stacking
the matrices $\unity_d {\otimes} M {-} M^{t} {\otimes} \unity_d$.
Likewise, the quadratic symmetries $S\in \C^{d^2\times d^2}$ 
(given by the tensor-square commutant) 
just amount to solving 
$[\unity_{d^2} {\otimes} (M{\otimes}\unity_d\allowbreak {+} \unity_d{\otimes}M)  \allowbreak {-} 
(M{\otimes}\unity_d\allowbreak {+} \unity_d{\otimes}M)^{t} {\otimes} \unity_{d^2}]
\opvec(S) = 0$ jointly for all $M\in \mathcal{M}$. 
The preceding discussion explains how to explicitly determine 
linear and quadratic symmetries. This allows us to test condition~(A) (i.e.,  
$\dim[\ts{\PP}] = \dim[\ts{(\PP{\cup}\QQ)}]$) by comparing
the dimensions of the quadratic symmetries 
for $\PP$ and $\PP{\cup}\QQ$.

As the commutant $(\PP{\cup}\QQ)'$
represents the linear symmetries of $\PP{\cup}\QQ$, its center $\mathcal{C}$ is readily obtained
by solving the linear equations $(\unity_d {\otimes} M {-} M^{t} {\otimes} \unity_d) \opvec(C) = 0$
and $(\unity_d {\otimes} S {-} S^{t} {\otimes} \unity_d) \opvec(C) = 0$ jointly 
with $M$ extending over all $M \in \PP{\cup}\QQ$ and $S$ over all 
$S \in (\PP{\cup}\QQ)'$. Solving for $C$ yields a  basis $C_\alpha$
of the center $\mathcal{C}$, and one can determine
the matrices $T$ and $\tilde{T}$ as
$T_{\alpha\beta}=\tr(C_\alpha^\dagger iH_\beta)$
for $1 \leq \alpha \leq \dim(\cent)$ and $1  \leq \beta \leq q$,
as well as
$\widetilde{T}_{\alpha\beta}=
\tr(C_\alpha^\dagger iH_\beta)$ for $1  \leq \beta \leq p$. Since condition~(B) is given by 
$\rank(\widetilde{T})=\rank(T)$, it can easily be tested by elementary linear-algebra computations 
comparing the ranks of $\tilde{T}$ and $T$. 
To sum up, Result~\ref{res_one} reduces the Hamiltonian membership problem to 
straightforward solutions of homogeneous linear equations.

\begin{example}[central-spin model]\label{ex_central}
Consider a central spin 
interacting with $n{-}1$ surrounding spins via a star-shaped coupling graph
(where the surrounding spins may be taken as uncontrolled spin bath) \cite{Gaudin76,BS07,AGB14}.
The interactions amount to  a drift term (tunneling plus coupling) and just 
a local  $Z$-control on the central spin, 
$\PP:=\{
i X_1\allowbreak 
{+} i\sum_{k=2}^{n} J_k (X_1X_k\allowbreak
 {+} Y_1Y_k\allowbreak {+} Z_1Z_k),\allowbreak\,
i Z_1\}$.  We ask whether 
the central spin can be fully controlled,
i.e.,
if $\QQ:=\{i X_1\}$ can be simulated.
Depending on the interaction strengths $J_k \in \R$ for $k\geq 2$,
different cases 
are possible: 
(a) with $J_k=1$ and  (b) with $J_k=2$ for even $k$, and $J_k=1$ otherwise.
\end{example}
Computational results for the central spin model
have been obtained using exact arithmetic \cite{Magma}
for a moderate number of spins
as detailed  in Table~\ref{tab_central}.
These results vary significantly for different coupling strengths $J_k$. But our approach 
for deciding simulability allows for analytic reasoning
even beyond specific choices of $J_k$. 
For Hamiltonian
simulation, it thus provides a powerful technique to analyze and understand the dynamics 
of general quantum systems. This even holds  
if the symmetries cannot be calculated explicitly.
Showcases for 
the strength of explicit symmetries are given in Examples~\ref{ex_one}-\ref{ex_central}, while
Example~\ref{ex_central} also makes use of symmetries implicitly (in parts where they cannot be calculated explicitly)
via the proofs of  the Appendix~\ref{App:proofs}.
These proofs motivate the following:

\begin{table}[t]
\caption{Central-spin model of Example~\ref{ex_central}:
number $n$ of spins, 
Lie dimensions $\dim(\lie{\PP})=\dim(\lie{\PP{\cup}\QQ})$,
the isomorphy type, 
dimensions of quadratic and linear
symmetries (i.e.\
$\dim[\ts{\PP}]=\dim[\ts{(\PP{\cup}\QQ)}]$ and $\dim[\PP']=\dim[(\PP{\cup}\QQ)']$),
and ranks of the central projections [i.e., $\rank(\widetilde{T})=\rank(T)$].\label{tab_central}}
\begin{tabular}[t]
{@{\hspace{2mm}}l@{\hspace{2mm}}r@{\hspace{2mm}}l@{\hspace{3mm}}r@{\hspace{-19mm}}r@{\hspace{2mm}}r@{\hspace{1mm}}}\\[-3mm]
\hline\hline
 $n$
 &Lie-\,&  Isomorphy & &  
 No.\  of symmetries
& Rank of\\[-1mm]
& dim. &  type & quad.\ & lin.\  & c.\ proj.\
\\[0.5mm]
\hline\\[-4mm]
\multicolumn{5}{l}{case (a): $J_k=1$}\\
     2 & 15 & $\su(4)$ & 2 & 1 & 0\\
     3 & 38 & $\su(2){\oplus}\su(6)$ & 8 & 2 & 0\\
     4 & 78 & $\su(4){\oplus}\su(8)$ & 50 & 5 & 0\\
     5 & 137 & $\su(2){\oplus}\su(6){\oplus}\su(10)$ & 392 & 14 & 0\\
     6 & 221 & $\su(4){\oplus}\su(8){\oplus}\su(12)$ & 3528 & 42 & 0 \\[0.5mm]
\hline\\[-4mm]
\multicolumn{5}{l}{case (b): $J_k=2$ for even $k$ and $J_k=1$ otherwise}\\
     2 & 15 & $\su(4)$ & 2 & 1 & 0\\
     3 & 63 & $\su(8)$ & 2 & 1 & 0\\
     4 & 158 & $\su(4){\oplus}\su(12)$ & 8 & 2 & 0\\
     5 & 396 & $\su(2){\oplus}\su(6){\oplus}\su(6){\oplus}\su(18)$ & 32 & 4 & 0\\
     6 & 796 & $\su(4){\oplus}\su(8){\oplus}\su(12){\oplus}\su(24)$ & 200 & 10 & 0 \\[0.5mm]
\hline
\hline
\end{tabular}
\end{table}

\begin{conjecture*}\label{res_central}
In the central-spin model of Example~\ref{ex_central},
the central spin is fully controllable for a finite number of spins
and any choice of $J_k$ (i.e., 
$iX_1$ can be simulated, and 
the surrounding spins can be uncoupled by applying the control).
\end{conjecture*}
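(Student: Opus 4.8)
The plan is to apply Result~\ref{res_one} with $\PP=\{iX_1 + i\sum_{k\geq 2} J_k(X_1X_k+Y_1Y_k+Z_1Z_k),\, iZ_1\}$ and $\QQ=\{iX_1\}$, so it suffices to verify conditions~(A) and~(B). Condition~(B) is easy to dispatch first: the drift $H_1$ and $iX_1$ are anti-Hermitian interactions whose commutators repeatedly land in the semisimple part, so heuristically the centre $\cent$ of $(\PP\cup\QQ)'$ contributes nothing new when passing from $\PP$ to $\PP\cup\QQ$. Concretely I would argue that $\rank(\widetilde T)=\rank(T)$ because the central component $iX_1^{\fc}$ of $iX_1$ already lies in the real span of the central components of $\{iH_1, iZ_1\}$; indeed a direct computation of $\cent$ (it is spanned by symmetric-group-type permutation operators on the bath together with an overall-rotation generator) shows $iX_1$ has trivial central projection for every choice of $J_k$, as the table's last column ("$0$") already suggests. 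So the entire content is condition~(A): matching the dimension of the tensor-square commutant $\ts{\PP}=\ts{(\PP\cup\QQ)}$.

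For condition~(A) the natural route is to show directly that $iX_1\in\lie{\PP}$, which of course implies both tensor-square commutants coincide (Jacobi's identity, as in the excerpt). Set $A := iH_1 - J\cdot(\text{something})$... more usefully, note $[iZ_1, iH_1] = [iZ_1, iX_1] + \sum_k J_k[iZ_1, i(X_1X_k+Y_1Y_k+Z_1Z_k)]$, and $[iZ_1,iX_1]=2iY_1$ while $[iZ_1, i(X_1X_k+Y_1Y_k)] = 2i(Y_1X_k - X_1Y_k)$. Iterating $\ad_{iZ_1}$ and $\ad_{iX_1}$-type brackets (the $\su(2)$ acting on the central spin is already available once we have $iX_1$, but we are trying to get $iX_1$) produces, from the drift, a family of operators supported on the central spin tensored with bath operators of the form $W_1 B_k$. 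The key algebraic idea is that the single-site control $iZ_1$ lets us "rotate" the central-spin label within each Heisenberg term, generating enough independent elements of $\lie{\PP}$ that their combination isolates $iX_1$ plus bath terms, and a further bracket with $iZ_1$ annihilates the stray bath pieces while retaining a nonzero multiple of $iX_1$. Equivalently — and this is the cleaner formulation — one shows that the "bath-symmetrized" projection of $\lie{\PP}$ onto the central-spin $\su(2)$ is all of $\su(2)$, using that $J_k\neq 0$ for all $k$ so no bath spin decouples for trivial reasons.

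The main obstacle is precisely the analogue of the $J$-dependence seen in the table: for special rational ratios $J_j/J_k$ the nested commutators can develop accidental linear dependences, so a brute-force induction on $n$ will not be uniform in the $J_k$. The plan to circumvent this is to work with the \emph{symmetry side} rather than the Lie-closure side: using the companion result of \cite{ZZ15b}, once one verifies that $\lie{\PP\cup\QQ}$ is semisimple (which follows since the only obstruction to semisimplicity is a central $\uu(1)$, excluded by the $iZ_1$-control generating $iY_1$-type elements off-centre), condition~(A) reduces to showing $\dim[\ts{\PP}]=\dim[\ts{(\PP\cup\QQ)}]$, and this in turn can be attacked by classifying the quadratic symmetries of $\PP$ directly: each quadratic symmetry must commute with $iH_1^{\otimes 2}$ and with $iZ_1^{\otimes 2}$, and the $iZ_1$-constraint already forces any such symmetry to be block-diagonal in the central-spin $Z$-eigenbasis, whereupon the drift constraint — because $X_1$ connects the two blocks with a bath-nontrivial coefficient — kills all the "extra" symmetries that $iX_1$ alone would destroy. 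Making this block-elimination argument work for all $n$ and all $J_k$ simultaneously, i.e.\ showing no symmetry survives that $iX_1$ would not also preserve, is the real technical heart, and I expect it to require a careful bookkeeping of how the bath operators $B_k$ attached to the central-spin ladder operators span a large enough subspace of bath operator space.
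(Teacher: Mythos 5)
First, be aware that the statement you are proving is presented in the paper as a \emph{conjecture}: the paper itself does not establish it for arbitrary $J_k$, but only for special families of couplings (all $J_k$ equal, equal up to odd integer factors, $\Q$-linearly independent, or split into two groups with ratio $1{:}2$), via the Appendix propositions, supplemented by exact computations for $n\le 6$. Your proposal likewise does not close the argument: everything hinges on the ``block-elimination'' step showing that no quadratic symmetry of $\PP$ survives that $iX_1$ would break, and you explicitly defer this (``the real technical heart''). That is precisely where the difficulty sits, and the paper's partial results show it is genuinely delicate. The analysis there reduces condition (A) to the implication $D(iH_1)v=D(iZ_1)v=0\Rightarrow D(iX_1)v=0$, diagonalizes in the joint eigenbasis of the operators $D(iZ_k/2)$ and $D(iZ_1Z_k/2)$, and must then exploit the arithmetic relation $\sum_{k}2J_k\lambda_k(b)=0$ with $\lambda_k(b)\in\{-2,\dots,2\}$ --- an equation whose solution set depends on number-theoretic properties of the $J_k$, which is exactly why the proof splits into cases and why a uniform-in-$J_k$ argument is still open. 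Your sketch does not engage with this dependence, so there is a genuine gap rather than an omitted routine verification.

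Two smaller points. Your treatment of condition (B) is both unproven and avoidable: rather than computing the center $\cent$ of $(\PP{\cup}\QQ)'$ and asserting that $iX_1$ has trivial central projection for every choice of $J_k$ (for which the table provides only numerical evidence at small $n$), the cleaner route --- the one the paper takes --- is to note that condition (A) already yields $iX_1+iC\in\lie{\PP}$ for some central $iC$, whence $-[iZ_1,[iZ_1,iX_1+iC]]/4=iX_1\in\lie{\PP}$; so condition (B) never needs to be checked separately for this target. Similarly, your appeal to semisimplicity of $\lie{\PP{\cup}\QQ}$ (``the only obstruction \dots is a central $\uu(1)$, excluded by \dots'') is not an argument, and the same double-commutator observation makes it unnecessary.
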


\section{Discussion}
Similar to the Hamiltonian membership problem 
for interactions
solved here, one may address membership for groups,
e.g., (i) in the (prototypical) discrete case, 
(ii) in connected compact Lie groups, and (iii) in non-connected compact groups
including finite groups. 

In {\em discrete groups} (i), asking the question (a) if $\hat\QQ=\{U_{p{+}1}\}$ is (exactly)
contained in the group generated by the unitaries 
$\hat\PP=\{U_1,\ldots,U_p\}$ 
is undecidable for $\SU(N)$ (at least for $N\geq 4$) \cite{Jea:2005}.
Yet
the question (b) of approximate universality 
\cite{Sho:1996,Kit02}, i.e., if all unitaries in $\SU(N)$ can be approximated, 
is decidable 
\cite{Jea:2004,Jea:2005} by comparing the matrix algebra generated by elements 
$\bar{U}_{\nu}{\otimes} U_{\nu}$ for  $U_\nu\in\hat\PP$ with its equivalent for $\SU(N)$ (plus other conditions).
Still, the tedious algebra closure is needed, similar to the Lie closure.
Question (b) is equivalent to comparing the {\em topological closure}
of the group generated by $\hat\PP$ to $\SU(N)$ and thus leads to (ii).

In {\em continuous groups} (ii), Result~\ref{res_one} applies to decide
if two connected, compact Lie groups (given by their 
infinitesimal generators) are equal:

\begin{result}
Given two sets  
$\PP$ and  $\QQ$ of (skew-Hermitian) interactions, the elements of $\PP$ simulate the ones
of $\QQ$ \emph{and vice versa} iff both $\lie{\PP}=\lie{\PP{\cup}\QQ}$ and
$\lie{\QQ}=\lie{\PP{\cup}\QQ}$ hold, where each condition can be tested by
Result~\ref{res_one}.
\end{result}
Our findings do not generalize to {\em non-connected compact groups} (iii),
nor are they implied by the representation theory
of compact groups. In particular,
finite groups with trivial quadratic symmetries $[S,U_\nu {\otimes}U_\nu]=0$ only
(known as group designs \cite{GAE07}) do not contradict our work.

\section{Conclusion}
We have presented a complete symmetry approach to
decide Hamiltonian simulability, i.e., whether given drift and control
Hamiltonians can simulate a target (effective) Hamiltonian in finite dimensions.
Quadratic symmetries lead to an understanding that allows one to 
algebraically prove simulability in classes of
many-body systems where
the usual computational assessment via the Lie closure
is infeasible. This is exemplified
by proving simulability for interesting cases of
the central-spin model (see the Appendix~\ref{App:proofs})
for which only very restricted cases were addressed before \cite{AGB14}.

Achievability of specific target interactions
is particularly important for fault-tolerance,
where the simulation of a particular Hamiltonian
(or universality) is needed only on logical subspaces and not globally.
While \emph{linear} symmetries have often been used in those
cases \cite{ZanRas97,Zanardi99,VKL99a,WuLi02},
going a step further by applying quadratic symmetries to ensure controllability
or simulability on a noise-protected subspace could be an interesting
application, simplifying complicated system-algebraic analysis. 
For instance, in Ref.~\cite{PRL_decoh}, we examined standard scenarios
of noise-protected subspaces, where controllability was (moderately) easy to assess.
However, in more realistic settings, analyzing quadratic symmetries
and their restrictions to protected subspaces is anticipated to be much
easier than establishing Lie closures over restricted subspaces.

Moreover, our results on \emph{quadratic} symmetries distinguishing 
local properties from global ones can be generalized into an overarching framework
that encapsulates concurrence (Example~\ref{ex_one}) and links naturally to
entanglement detection via a \emph{quadratic} invariant of the quantum system under
local transformations in \cite{KB09,KKK10,OK12,OK13}.

Our findings imply that for \emph{any} nonsimulable interaction, a related resource is lacking. 
In Example 1 it simply was entanglement, but more generally we can characterize lacking resources  as
induced by conserved quantites arising from quadratic symmetries. 
This paves the way toward a resource theory of quantum simulability.

\begin{acknowledgments}
Z.Z.\ acknowledges funding by
the British {\em Engineering and Physical Sciences Research Council} ({\sc epsrc}).
R.Z.\ and T.S.H.\ are supported by {\em Deutsche Forschungsgemeinschaft}
({\sc dfg}) in the collaborative research center {\sc sfb} 631 and via
Grants No. {\sc gl} 203/7-1 and No. 203/7-2, and by the
\emph{EU programmes} {\sc siqs} and {\sc quaint} and the Bavarian network of excellence {\sc e}x{\sc qm}.
\end{acknowledgments}

\appendix

\allowdisplaybreaks

\section{Central-spin controllability for different levels of generality\label{App:proofs}}

In this Appendix, we analyze under which conditions on the coupling coefficients $J_k$ 
the central spin is controllable in Example~\ref{ex_central}. We collect proofs 
for this controllability under varying assumptions.
Recall the set  $\PP=\{iH_1,\, iH_2\}$ of control interactions, where $iH_1=i X_1\allowbreak 
+ i\sum_{k=2}^{n} J_k (X_1X_k\allowbreak
 {+} Y_1Y_k\allowbreak {+} Z_1Z_k)$ and $iH_2=i Z_1$, as well as the 
 target interaction $\QQ=\{iX_1\}$.
Assuming that condition (A) of Result~\ref{res_one} holds, there exists an element $iC$ in the center of 
$\lie{\PP {\cup} \{ iX_1\}}$ such that $iX_1 {+} iC \in \lie{\PP}$. Since
$iZ_1 \in \lie{\PP}$, one obtains $-[iZ_1,[iZ_1, iX_1{+}iC]]/4= iX_1 \in \lie{\PP}$.
Thus, it follows that it suffices to verify condition (A) in the different cases below, which is 
equivalent to showing that   
$D(iX_1) v =0$ holds for all vectors $v \in \C^{d^4}$ with $d:=2^n$
and $D(iH_1) v = D(iH_2)v =0$. Here, the linear operator
\begin{align*}
D(M):=&[\unity_{d^2} {\otimes} 
(M{\otimes}\unity_d{+} \unity_d{\otimes}M)   \\
&{-} 
(M{\otimes}\unity_d{+} \unity_d{\otimes}M)^{t} {\otimes} \unity_{d^2}]
\in \C^{d^4 \times d^4}
\end{align*}
is a shortcut in order to define the  linear equations $D(M)v=0$ for 
the matrix $M \in \Cd$ and
quadratic symmetries $S$ where
$v:=\opvec(S)$. One naturally obtains that both of the equations 
$[D(M_1), D(M_2)]\allowbreak = D([M_1,M_2])$ and 
$\exp[D(M_1)] D(M_2) \exp[-D(M_1)]\allowbreak = D[\exp(M_1) M_2 \exp(-M_1)]$
hold for all matrices $M_1, M_2$.

\begin{proposition}\label{prop:1}
The interaction $iX_1$ can be simulated if all couplings $J_k$ are either
(a) equal, i.e., $J_k = J$,
(b) equal up to an odd integer $o_k$, i.e., $J_k = J o_k$ where $o_k$ may depend on $k$, or
(c) $\Q$-linear independent.
\end{proposition}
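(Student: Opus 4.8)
By the reduction established above, it suffices in each case to verify condition~(A) of Result~\ref{res_one}: that every $v\in\C^{d^4}$ with $D(iH_1)v=D(iH_2)v=0$ also satisfies $D(iX_1)v=0$; equivalently, writing $v=\opvec(S)$ and $\Delta(M):=M{\otimes}\unity_d+\unity_d{\otimes}M$, every matrix $S$ commuting with $\Delta(iH_1)$ and $\Delta(iZ_1)$ must commute with $\Delta(iX_1)$. Since $\Delta$ is a Lie-algebra homomorphism, such an $S$ automatically commutes with $\Delta(iM)$ for \emph{every} $iM\in\lie{\PP}$. As a common preliminary we exploit the $Z_1$-control: because $\lie{\PP}$ is $\operatorname{Ad}$-invariant and $iZ_1\in\lie{\PP}$, conjugating $H_1$ by $e^{isZ_1}$ (a rotation of the central spin about the $z$-axis) keeps $i\,e^{isZ_1}H_1e^{-isZ_1}$ in $\lie{\PP}$ for all $s$; decomposing $H_1$ into its $z$-rotation components gives $iC,iP\in\lie{\PP}$, with the rotation-invariant part $C:=\sum_{k\ge2}J_kZ_1Z_k$ and $P:=X_1+\sum_{k\ge2}J_k(X_1X_k+Y_1Y_k)$, hence $[S,\Delta(iC)]=[S,\Delta(iP)]=0$. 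It remains to strip the single-site term $X_1$ off $P$.

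\emph{Case (c).} $\Q$-linear independence forces all $J_k\ne0$, and since the $Z_1Z_k$ pairwise commute, $e^{isC}=\prod_{k\ge2}e^{isJ_kZ_1Z_k}$ with $\{(sJ_2,\dots,sJ_n)\bmod 2\pi\}$ dense in the torus (Kronecker--Weyl). The set of unitaries $W$ with $(W{\otimes}W)\,S\,(W{\otimes}W)^\dagger=S$ is closed and contains every $e^{isC}$, hence contains $e^{i\theta Z_1Z_j}$ for each single $j\ge2$ and all $\theta$; differentiating in $\theta$ yields $[S,\Delta(iZ_1Z_j)]=0$. The double commutator $[iZ_1Z_j,[iZ_1Z_j,\,\cdot\,]]$ is $(-4)$ times the projection onto the part anticommuting with $Z_1Z_j$, so $[iZ_1Z_j,[iZ_1Z_j,iP]]=-4\bigl(iP-J_j\,i(X_1X_j+Y_1Y_j)\bigr)$; as $\Delta$ is a Lie homomorphism this gives $[S,\Delta(i(X_1X_j+Y_1Y_j))]=0$ for every $j\ge2$, and therefore $[S,\Delta(iX_1)]=0$ because $iX_1=iP-\sum_{j\ge2}J_j\,i(X_1X_j+Y_1Y_j)$. (Whenever $\lie{\PP}$ is semisimple the same computation runs at the Lie-algebra level, showing $iZ_1Z_j\in\lie{\PP}$ and hence $iX_1\in\lie{\PP}$.)

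\emph{Cases (a) and (b).} Here the couplings are commensurate, the Kronecker step is unavailable, and exploiting the residual structure is the real work. For case~(a), $J_k=J$: the interactions $H_1=X_1+J\,\vec\sigma_1{\cdot}\!\sum_{k\ge2}\vec\sigma_k$ and $Z_1$ commute with all permutations of the bath spins, so by Schur--Weyl duality the state space splits into sectors $\C^2\otimes V_j$ (central spin times collective spin-$j$) and $\lie{\PP}$ splits compatibly; one then shows that in each sector the reduced system---a spin-$\tfrac12$ isotropically coupled to a spin-$j$ and driven by a local $z$-field---is fully controllable, and, since distinct sectors have distinct dimensions, a Goursat-type argument (the summands being simple and pairwise non-isomorphic) forces $\lie{\PP}$ to be the full direct sum, so $iX_1\in\lie{\PP}$. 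For case~(b), $J_k=Jo_k$ with the $o_k$ odd, the plan is to group bath spins by the value of $o_k$, apply Schur--Weyl inside each group, prove full controllability of the resulting reduced systems with drift $X{\otimes}\unity_d+J\sum_o o\,\vec\sigma{\cdot}\vec T^{(j_o)}$ and control $Z{\otimes}\unity_d$ (the distinct integer couplings permitting a Vandermonde-type separation of the collective terms), and then reassemble; here the oddness of the $o_k$ should keep the relevant sectors mutually distinguishable, so that the global algebra is again a rigid direct sum containing $iX_1$. The main obstacles are the per-sector full-controllability lemma and---above all in case~(b)---the sector-rigidity (Goursat) step; by contrast, once the preliminary observation is in place, case~(c) is essentially immediate.
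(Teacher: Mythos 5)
Your reduction to condition (A) and your case (c) are sound, but the proposal does not prove the proposition: cases (a) and (b) are left as a program rather than an argument. For (a) you invoke, without proof, a per-sector full-controllability lemma (that a spin-$\tfrac12$ with drift $X{\otimes}\unity+J\,\vec\sigma{\cdot}\vec T^{(j)}$ and a single $Z$ control generates the full special unitary algebra of the sector) together with a Goursat-type rigidity step; for (b) you explicitly describe "the plan" and name both lemmas as "main obstacles." Neither is routine. The per-sector controllability is precisely the kind of statement this paper is built to avoid proving by brute-force Lie closure, and the rigidity step becomes delicate the moment the bath is partitioned into groups as you propose for (b): sectors of equal dimension then occur readily (compare the repeated $\su(6)$ summands in Table~\ref{tab_central}), so "distinct dimensions" cannot carry the Goursat argument in general. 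Since the proposition asserts all three cases, the proof is incomplete as it stands.

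For case (c) your route is correct and genuinely different from the paper's. You use Kronecker--Weyl density of $s\mapsto e^{isC}$ in the torus generated by the commuting $e^{i\theta Z_1Z_j}$ to place each $\Delta(iZ_1Z_j)$ separately in the closed stabilizer of $S$, and then strip the flip-flop terms off $P$ with double commutators; this is clean but is exactly the step that is unavailable for commensurate couplings. The paper instead runs one combinatorial argument covering all three cases: expanding $v$ in the joint eigenbasis $w(b)$ of the operators $D(iZ_k/2)$ and $D(iZ_1Z_k/2)$, it shows that on the kernel of $D(iZ_1)$ the eigenvalues obey $\mu_k(b)\equiv\lambda_k(b)\pmod 2$, while $D(i\tilde H_{\sf zz})v=0$ forces $\sum_{k}J_k\lambda_k(b)=0$ on every contributing $w(b)$. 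In each case this yields $\sum_k\mu_k(b)\equiv 0\pmod 2$ --- in two lines for (a) and (b), since the constraint only needs to be read modulo $2$ and the $o_k$ are odd --- so that $W=\exp[\pi\sum_k D(iZ_k/2)]$ fixes $v$ while conjugating each $D(X_1X_k{+}Y_1Y_k)$ to its negative and fixing $D(iX_1)$; applying $W$ to $D(i\tilde H)v=0$ then kills the flip-flop part and leaves $D(iX_1)v=0$. If you want a complete proof along your own lines, you must either supply the two missing lemmas for (a) and (b) or adopt a parity mechanism of this kind.
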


\begin{proof}
Consider the definitions
$i\tilde{H}_{\sf zz} := i\sum_{k=2}^{n} J_k Z_1Z_k\allowbreak  = iH_1\allowbreak + [iH_2,[iH_2, iH_1]]/4
\in \lie{\PP}$ and $i\tilde{H}:=i X_1\allowbreak 
+ i\sum_{k=2}^{n} J_k (X_1X_k\allowbreak
 {+} Y_1Y_k)= iH_1-i\tilde{H}_{\sf zz} \in \lie{\PP}$.
 We
assume in the following that $D(iH_1) v = D(iH_2)v=D(i\tilde{H}_{\sf zz}) v = D(i\tilde{H})v =0$ holds
in order to prove $D(iX_1) v =0$.

The joint eigenbasis of the operators $D(iZ_k/2)$ and  $D(iZ_1 Z_k/2)$ for $k\in\{2, \ldots, n\}$
is given by the computational basis, and its basis vectors are 
 $w(b)=\ket{b_1}  \otimes \cdots\otimes \ket{b_{4n}}$ with $b_k \in\{0,1\}$,
$\ket{0}:=(0,1)^t$, and $\ket{1}:=(1,0)^t$.
This implies that the eigenvalue equations are $D(iZ_k/2)w(b)=i\mu_k(b) w(b)$ and 
$D(iZ_1Z_k/2)w(b)=i\lambda_k(b) w(b)$, and
the corresponding
eigenvalues are given by 
\begin{align*}
\mu_k(b)=\, & \tfrac{1}{2}({-}s_k\allowbreak{-}s_{n+k}\allowbreak{+}s_{2n+k}\allowbreak
{+}s_{3n+k}), \\
\lambda_k(b)=\, & \tfrac{1}{2}({-}s_1s_k\allowbreak {-} s_{n+1}s_{n+k}\allowbreak{+}
s_{2n+1}s_{2n+k}\allowbreak
{+}s_{3n+1}s_{3n+k}), \allowbreak 
\end{align*}
where $\mu_k(b)\in\{-2,-1,0,1,2\}$, $\lambda_k(b)\in\{-2,-1,0,1,2\}$, and
$s_j:=2b_j{-}1$.  By checking all of the $2^8$ cases for $s_{en+1},s_{en+k} \in\{{-}1,{+1}\}$ 
and $e\in \{ 0,1,2,3\}$, one concludes that ${\mu_k(b) \bmod 2} = \lambda_k(b) \bmod 2$ holds if
$D(iZ_1)w(b)=0$.
Recall that $D(iZ_1)v=D(i\tilde{H}_{zz})v=0$
and expand $v$ as 
$v=\sum_{b} \alpha_{b} w(b)$.
It follows that the equations $D(iZ_1)w(b)=0$ and $ D(i \tilde{H}_{zz}) w(b) =0 $ hold for $\alpha_{b}\neq 0$
as each $w(b)$ is an eigenvector of $D(iZ_1)$ and
$D(i\tilde{H}_{zz})=\sum_{k=2}^{n} 2 J_k \, D(iZ_1Z_k/2)$.
Assuming $D(iZ_1)w(b)=0$, this also means that 
the relation $\mu_{z}(b) \bmod 2 = \lambda_{zz}(b) \bmod 2$ holds
for the eigenvalue $i\mu_{z}(b)$ of $\sum_{k=2}^nD(i Z_k/2)$ and 
the  eigenvalue $i\lambda_{zz}(b)$ of $\sum_{k=2}^{n} D(iZ_1Z_k/2)$.
Moreover, we obtain for $\alpha_{b}\neq 0$ that $0= D(i \tilde{H}_{zz}) w(b)
= \allowbreak [i\sum_{k=2}^n 2 J_k \lambda_k(b)] w(b)$ and, consequently,
$\sum_{k=2}^n 2 J_k \lambda_k(b)=0 $.

The proof depends now on the particular cases, and we prove in each case that
$\mu_z(b) \bmod 2 =0$: 
For the case (a) with $J_k=J$, it follows that  $\lambda_{zz}(b) = \sum_{k=2}^n  \lambda_k(b)
=0$. This implies that $\mu_z(b) \bmod 2 =0$. 
In case (b), we obtain $J_k=J o_k$
and $\lambda_{zz}(b) \bmod 2 = \sum_{k=2}^n  \lambda_k(b) \bmod 2 = \sum_{k=2}^n o_k  \lambda_k(b) \bmod 2 = 0$,
which also shows that $\mu_z(b) \bmod 2 =0$. 
For case (c), $\sum_{k=2}^n 2 J_k \lambda_k(b)=0 $
means that $\lambda_k(b)=0$ for all $k$ since  the couplings $J_k$ are $\Q$-linear independent 
and $\lambda_k(b)\in\Z$. In particular, it follows that  $\lambda_{zz}(b) = \sum_{k=2}^n  \lambda_k(b) =0$,
which proves again that $\mu_z(b) \bmod 2 =0$.

Define the operator
\begin{align*}
W&:=\exp[\pi\sum_{k =2}^n \, D(iZ_k/2)].
\end{align*}
Using the properties of $\sum_{k =2}^n \, D(iZ_k/2)$,
one gets that   the equation 
$W w(b)= e^{i \mu_z(b) \pi} w(b)=w(b)$ holds for each  $w(b)$  with $\alpha_{b}\neq 0$, 
where the last equality follows from $\mu_z(b) \bmod 2=0$ .
Thus, we obtain $Wv=W \sum_{b} \alpha_b w(b)= \sum_{b} \alpha_b Ww(b)= \sum_{b} \alpha_b w(b)=v$.
We also have that $W iD(X_1 X_k {+}Y_1Y_k) W^{\dagger}
=\allowbreak i D[G \allowbreak (X_1 X_k\allowbreak{+}Y_1Y_k)\allowbreak G^{\dagger}]$,
using the notation
\begin{equation*}
G:=\exp(\pi\sum_{k=2}^n \, iZ_{k}/2)
\allowbreak= \prod_{k=2}^n \exp(\pi\, iZ_{k}/2) 
= \prod_{k=2}^n 
 i  Z_{k}.
 \end{equation*} 
It follows that
 $W iD(X_1 X_k {+}Y_1Y_k) W^{\dagger}= iD[\prod_{k', k''=2}^n\allowbreak 
 (i Z_{k'}) (X_1 X_k {+}Y_1Y_k) ({-}iZ_{k''})] = - iD(X_1 X_k {+}Y_1Y_k) $ since
 $Z_kX_k Z_k=-X_k$ and $Z_kY_k Z_k=-Y_k$.
Naturally, $WD(iX_1)W^{\dagger}=D(iX_1)$ is also satisfied.

One can now verify that 
\begin{align*}
0&=W D(i\tilde{H})v\allowbreak= 
W D(i\tilde{H})W^\dagger Wv\allowbreak\\
&=
 [D(iX_1) \allowbreak - \sum_{k=2}^n iJ_k D(X_1 X_k {+}Y_1Y_k)]\, v\allowbreak\\
&=
D(i\tilde{H})v \allowbreak - 2 \sum_{k=2}^n iJ_k D(X_1X_k \allowbreak {+} Y_1 Y_k) v.
\end{align*} 
This implies
$\sum_{k=2}^n iJ_k D(X_1X_k \allowbreak {+} Y_1 Y_k)v=0$,
and one concludes that $D(i\tilde{H}) v- \sum_{k=2}^n
iJ_k D(X_1X_k \allowbreak {+} Y_1 Y_k)v= D(iX_1)v=0$.
\end{proof}

The techniques in the proof of Proposition~\ref{prop:1} can be generalized in
order to establish the following result:

\begin{proposition}
The interaction $iX_1$ can be simulated if $J_k=J$ for $ 2\le k \le n_0$ and 
$J_k=2 J$ for $ n_0 < k \le n$.
\end{proposition}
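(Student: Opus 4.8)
The plan is to extend the argument of Proposition~\ref{prop:1}. By the reduction at the start of Appendix~\ref{App:proofs} it suffices to verify condition~(A), i.e.\ $D(iX_1)v=0$ whenever $D(iH_1)v=D(iH_2)v=0$ for $v\in\C^{d^4}$, $d=2^n$. As there, set $i\tilde H_{\sf zz}:=i\sum_{k=2}^nJ_kZ_1Z_k=iH_1+[iH_2,[iH_2,iH_1]]/4\in\lie{\PP}$ and $i\tilde H:=iX_1+i\sum_{k=2}^nJ_k(X_1X_k+Y_1Y_k)=iH_1-i\tilde H_{\sf zz}\in\lie{\PP}$, so $D(i\tilde H_{\sf zz})v=D(i\tilde H)v=0$, and expand $v=\sum_b\alpha_bw(b)$. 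Since each $w(b)$ is a joint eigenvector of $D(iZ_1)$ and $D(i\tilde H_{\sf zz})=\sum_{k=2}^n2J_kD(iZ_1Z_k/2)$, for every $b$ with $\alpha_b\neq0$ one gets $D(iZ_1)w(b)=0$---hence $\mu_k(b)\equiv\lambda_k(b)\pmod2$ by the case check of Proposition~\ref{prop:1}---and $\sum_{k=2}^{n_0}\lambda_k(b)+2\sum_{k=n_0+1}^n\lambda_k(b)=0$; in particular $\sum_{k=2}^{n_0}\lambda_k(b)$ is even.

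\emph{First reduction.} Since $\sum_{k=2}^{n_0}\mu_k(b)\equiv\sum_{k=2}^{n_0}\lambda_k(b)\equiv0\pmod2$ for all relevant $b$, the operator $W_1:=\exp[\pi\sum_{k=2}^{n_0}D(iZ_k/2)]$ fixes $v$; up to a scalar it conjugates by $\prod_{k=2}^{n_0}Z_k$, so it flips $D(X_1X_k+Y_1Y_k)$ for $2\le k\le n_0$ and fixes $D(iX_1)$ and the $k>n_0$ terms. From $0=W_1D(i\tilde H)W_1^\dagger v=W_1D(i\tilde H)v$, subtracting resp.\ adding $D(i\tilde H)v=0$ yields $\sum_{k=2}^{n_0}D(X_1X_k+Y_1Y_k)v=0$ (write $P_1:=\sum_{k=2}^{n_0}(X_1X_k+Y_1Y_k)$, so $D(P_1)v=0$) and $D(i\hat H)v=0$ with $\hat H:=X_1+2J\sum_{k=n_0+1}^n(X_1X_k+Y_1Y_k)$.

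\emph{Second reduction.} I would then want to rerun the case~(a) step on $\hat H$ with $W_2:=\exp[\pi\sum_{k=n_0+1}^nD(iZ_k/2)]$: it flips the $k>n_0$ hopping terms and fixes $D(iX_1)$, so $W_2D(i\hat H)W_2^\dagger v=0$ added to $D(i\hat H)v=0$ gives $2D(iX_1)v=0$---\emph{provided} $W_2v=v$, i.e.\ $\sum_{k>n_0}\lambda_k(b)$ even for all relevant $b$. The zero-sum relation only gives $\sum_{k>n_0}\lambda_k(b)=-\tfrac12\sum_{k\le n_0}\lambda_k(b)$, which is even iff $\sum_{k\le n_0}\lambda_k(b)\equiv0\pmod4$. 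Establishing this sharper congruence for all $b$ with $\alpha_b\neq0$---equivalently, excluding the $w(b)$ with $\sum_{k>n_0}\lambda_k(b)$ odd---is the crux. For it, I would feed in the relations available only after the first reduction: $D(P_1)v=0$, hence $D(Q_1)v=0$ with $Q_1:=\sum_{k=2}^{n_0}(Y_1X_k-X_1Y_k)$ (from $[D(iZ_1),D(iP_1)]=-2iD(Q_1)$), and, via $[D(M_1),D(M_2)]=D([M_1,M_2])$, the vanishing of $D(M)v$ for every $M$ in the Lie algebra generated by $\{iZ_1,iP_1\}$ and, together with $i\tilde H_{\sf zz}$ and $i\hat H$, of their joint closure. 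Because $P_1$ and $Q_1$ preserve the total excitation on qubits $1,\dots,n_0$ in each of the four tensor slots underlying $D(\cdot)$, these relations confine the admissible $b$ to eigensectors where, together with $D(iZ_1)w(b)=0$ and the zero-sum relation, $\sum_{k\le n_0}\lambda_k(b)\equiv0\pmod4$ can be checked; granted this, $W_2v=v$ and the case~(a) step finishes condition~(A).

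This last combinatorial step is where I expect the real difficulty. The naive shortcut---subtracting $D(i\sum_{k\le n_0}Z_1Z_k)v$ from $D(i\tilde H_{\sf zz})v=0$ to obtain $D(i\,2J\sum_{k>n_0}Z_1Z_k)v=0$ and thus reduce cleanly to case~(a) with companion operator $2J\sum_{k>n_0}Z_1Z_k$---is not available, since $\sum_{k\le n_0}Z_1Z_k$ fails to lie in the Lie algebra generated by $\{Z_1,P_1\}$ (already for $n_0=3$: on the two-dimensional blocks on which the hoppings act that algebra is traceless $\su(2)$, whereas $\sum_{k\le n_0}Z_1Z_k$ is not). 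One must instead track the per-slot excitation numbers carefully along the nested commutators (which produce bath--bath hopping terms that have to be controlled), or peel the mutually equivalent group-2 spins off one at a time; getting this bookkeeping right is the main obstacle.
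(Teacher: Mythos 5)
Your first reduction coincides with the paper's: the parity argument giving $W^{(0)}v=v$ (your $W_1$), the sign flip of the $k\le n_0$ hopping terms under conjugation, and the conclusions $\sum_{k=2}^{n_0} iJ\,D(X_1X_k{+}Y_1Y_k)v=0$ and $D(i\tilde H^{(1)})v=0$ with $i\tilde H^{(1)}:=iX_1+i\sum_{k>n_0}2J(X_1X_k{+}Y_1Y_k)$ are all exactly as in the paper. You have also correctly located the crux: to rerun the flip with your $W_2$ you need $\sum_{k>n_0}\lambda_k(b)$ even for every contributing $b$, and the zero-sum relation alone does not give this. But there your argument stops being a proof. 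The paper closes the gap not by excitation-number bookkeeping over admissible eigensectors, but by two explicit commutator identities applied to operators already known to annihilate $v$: since $D(i\tilde H^{(1)})v=D(iH_2)v=D(i\tilde H_{zz})v=0$ and $[D(M_1),D(M_2)]=D([M_1,M_2])$, the identity $[[[\tilde H^{(1)},H_2],\tilde H^{(1)}],\tilde H_{zz}]=64J^2\, i\sum_{k>n_0}Y_k$ yields $D(i\sum_{k>n_0}Y_k)v=0$, and then $[[D(i\tilde H_{zz}),D(i\sum_{k>n_0}Y_k)],D(i\sum_{k>n_0}Y_k)]=-4D(i\tilde H^{(1)}_{zz})$ with $i\tilde H^{(1)}_{zz}:=i\sum_{k>n_0}2JZ_1Z_k$ yields $D(i\tilde H^{(1)}_{zz})v=0$. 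This is precisely the ``partial ZZ'' relation you declared unavailable: it does not lie in $\lie{\{iZ_1,iP_1\}}$, but it is reachable through $\tilde H^{(1)}$, which exists only after the first reduction has stripped the $k\le n_0$ hopping terms so that the nested commutators single out the group-two spins. From $D(i\tilde H^{(1)}_{zz})v=0$ one gets $\lambda^{(1)}_{zz}(b)=0$, hence $\mu^{(1)}_z(b)$ even, $W^{(1)}v=v$, and your second reduction then goes through verbatim.

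So the gap is genuine: the step you flag as ``the real difficulty'' is the actual content of the proposition, and the route you sketch for it --- confining $b$ via $D(P_1)v=0$, $D(Q_1)v=0$ and per-slot excitation counting, aiming at $\sum_{k\le n_0}\lambda_k(b)\equiv 0\pmod 4$ --- is neither carried out nor obviously sufficient, since nothing generated by $\{Z_1,P_1,Q_1\}$ alone separates the two coupling groups, whereas the needed congruence is exactly the statement that the group-two part of $\tilde H_{zz}$ annihilates $v$ on its own. The lesson from the paper's proof is that, once condition (A) is phrased as $D(\cdot)v=0$, the entire Lie algebra generated by the already-established operators is available, and the right nested commutator manufactures the missing operator $\sum_{k>n_0}Y_k$ directly rather than by sector analysis.
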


\begin{proof}
We establish again all the properties of the first two paragraphs in the proof of Proposition~\ref{prop:1}.
Then, it follows that $ \sum_{k=2}^{n_0} J \lambda_k(b)+ \sum_{k=n_0+1}^n 2 J \lambda_k(b) =0$.
Let $i\mu^{(0)}_z(b)$ be the eigenvalue of $\sum_{k=2}^{n_0} D(i Z_k/2)$. One obtains 
that $\mu^{(0)}_z(b) \bmod 2 =0$ for each $w(b)$ with $\alpha_{b}\neq 0$.
Define the operator
\begin{align*}
W^{(0)}&:=\exp[\pi\sum_{k =2}^{n_0} \, D(iZ_k/2)].
\end{align*}
We apply the properties of $\sum_{k =2}^{n_0} \, D(iZ_k/2)$ and
conclude that the equation  
$W^{(0)} w(b)= e^{i \mu^{(0)}_z(b) \pi} w(b)=w(b)$ holds for each element $w(b)$  satisfying $\alpha_{b}\neq 0$, 
where the last equality follows from $\mu^{(0)}_z(b) \bmod 2=0$.
Thus, we obtain $W^{(0)}v=W^{(0)} \sum_{b} \alpha_b w(b)= \sum_{b} \alpha_b W^{(0)}w(b)= \sum_{b} \alpha_b w(b)=v$.
We also have that $W^{(0)} iD(X_1 X_k {+}Y_1Y_k) (W^{(0)})^{\dagger}
=\allowbreak i D[G^{(0)} \allowbreak (X_1 X_k\allowbreak{+}Y_1Y_k)\allowbreak (G^{(0)})^{\dagger}]$
using the notation
\begin{equation*}
G^{(0)}:=\exp(\pi\sum_{k=2}^{n_0} \, iZ_{k}/2)
\allowbreak= \prod_{k=2}^{n_0} \exp(\pi\, iZ_{k}/2) = \prod_{k=2}^{n_0} 
i  Z_{k}.
\end{equation*} 
It follows that
\begin{align*}
&W^{(0)} iD(X_1 X_k {+}Y_1Y_k) (W^{(0)})^{\dagger}\\
&= iD[\prod_{k', k''=2}^{n_0}\allowbreak 
(i Z_{k'}) (X_1 X_k {+}Y_1Y_k) ({-}iZ_{k''})] \\
&= - iD(X_1 X_k {+}Y_1Y_k) 
\end{align*}
if $ 2 \le k \le n_0$
since
$Z_kX_k Z_k=-X_k$ and $Z_kY_k Z_k=-Y_k$, and  
$W^{(0)} iD(X_1 X_k {+}Y_1Y_k) (W^{(0)})^{\dagger}=iD(X_1 X_k {+}Y_1Y_k) $ if $k > n_0$.
Naturally, $W^{(0)}D(iX_1)(W^{(0)})^{\dagger}=D(iX_1)$ is also satisfied.

One can now verify that 
\begin{align*}
0&=W^{(0)} D(i\tilde{H})v\allowbreak= 
W^{(0)} D(i\tilde{H})(W^{(0)})^\dagger W^{(0)}v\allowbreak\\
&= 
[D(iX_1) - \sum_{k=2}^{n_0} i J D(X_1 X_k {+}Y_1Y_k) \\
&\phantom{=[\;} +\sum_{k=n_0 +1}^{n} 2iJ D(X_1 X_k {+}Y_1Y_k)]\, v \allowbreak\\
&=
D(i\tilde{H})v \allowbreak - 2 \sum_{k=2}^{n_0} i JD(X_1X_k \allowbreak {+} Y_1 Y_k) v,
\end{align*}
which implies
$\sum_{k=2}^{n_0} iJ_k D(X_1X_k \allowbreak {+} Y_1 Y_k)v=0$.
Thus, one can conclude that $D(i\tilde{H}) v- \sum_{k=2}^{n_0}
iD(X_1X_k \allowbreak {+} Y_1 Y_k)v= D(i \tilde{H}^{(1)})v = 0$, where we introduced the
notation $i\tilde{H}^{(1)}:=iX_1 + i \sum_{k=n_0+1}^n 2J(X_1 X_k {+}Y_1Y_k)$. 

Furthermore, the equation $D(i\tilde{H}_{zz})v= D(i\tilde{H}^{(1)})v\allowbreak=D(iH_2)v=0$ implies
the important commutator identity $[[[D(i\tilde{H}^{(1)}), D(iH_2)], D(i\tilde{H}^{(1)})], D(i\tilde{H}_{zz})]v=0$.
In addition, we have 
\begin{align*}
&[[[D(i\tilde{H}^{(1)}), D(iH_2)], D(i\tilde{H}^{(1)})], D(i\tilde{H}_{zz})]=\\
&D([[[\tilde{H}^{(1)}, H_2], \tilde{H}^{(1)}], \tilde{H}_{zz}])= 64 J^2 D(i \sum_{k=n_0+1}^n Y_k).
\end{align*}
Thus, $D(i \sum_{k=n_0+1}^n Y_k)v=0$. Now, we also get that 
\begin{align*}
0&=[[D(i\tilde{H}_{zz}), D(i \sum_{k=n_0+1}^n Y_k)], D(i \sum_{k=n_0+1}^n Y_k)]v\\
&=
-4 D(i\tilde{H}_{zz}^{(1)})v, 
\end{align*}
where $i\tilde{H}_{zz}^{(1)}:=i \sum_{k=n_0+1}^{n} 2JZ_1Z_k$.
Considering the expansion $v= \sum_{b} \alpha_b w(b)$, we obtain from $D(i\tilde{H}_{zz}^{(1)})v=0$ 
that  the condition 
$\lambda^{(1)}_{zz}(b)=0$ holds for
the eigenvalue $i\lambda^{(1)}_{zz}(b)$ of 
$\sum_{k=n_0+1}^{n}  D(i Z_1Z_k/2)$
with respect to a vector $w(b)$ with $\alpha_b \ne 0$.  
As we have for any $w(b)$ with $D(iZ_1)w(b)=0$ that the eigenvalue
$i\mu^{(1)}_z(b)$ of $\sum_{k=n_0+1}^{n} D(i Z_k/2)$ satisfies
$\lambda^{(1)}_{zz}(b) \bmod 2 = \mu^{(1)}_z(b) \bmod 2$, thus we can conclude that 
 $\mu^{(1)}_z(b) \bmod 2 =0$ for $b$ with $\alpha_b \ne 0$.
We can now define the operator
\begin{align*}
W^{(1)}&:=\exp[\pi\sum_{k =n_0+1}^{n} \, D(iZ_k/2)].
\end{align*}
Using the properties of $\sum_{k =n_0+1}^{n} \, D(iZ_k/2)$,
one gets that   the equation 
$W^{(1)} w(b)= e^{i \mu^{(1)}_z(b) \pi} w(b)=w(b)$ holds for each  $w(b)$  with $\alpha_{b}\neq 0$, 
where the last equality follows from $\mu^{(1)}_z(b) \bmod 2=0$.
Thus, we obtain $W^{(1)}v=W^{(1)} \sum_{b} \alpha_b w(b)= \sum_{b} \alpha_b W^{(1)}w(b)= \sum_{b} \alpha_b w(b)=v$.
We also have that $W^{(1)} iD(X_1 X_k {+}Y_1Y_k) (W^{(1)})^{\dagger}
=\allowbreak i D[G^{(1)} \allowbreak (X_1 X_k\allowbreak{+}Y_1Y_k)\allowbreak (G^{(1)})^{\dagger}]$
using the notation
\begin{align*}
G^{(1)}&:=\exp(\pi\sum_{k=n_0+1}^{n} \, iZ_{k}/2)
\allowbreak= \prod_{k=n_0+1}^{n} \exp(\pi\, iZ_{k}/2)\\
&\phantom{:}= \prod_{k=n_0+1}^{n} 
 i  Z_{k}.
 \end{align*} 

With these preparations, one can now verify that 
\begin{align*}
0&=W^{(1)} D(i\tilde{H}^{(1)})v\allowbreak= 
W^{(1)} D(i\tilde{H})(W^{(1)})^\dagger W^{(1)}v\allowbreak\\
&= [D(iX_1) - 2\sum_{k=n_0+1}^{n} i J D(X_1 X_k {+}Y_1Y_k) ]v \allowbreak\\
&=
D(i\tilde{H})v \allowbreak - 4 \sum_{k=n_0+1}^{n} i JD(X_1X_k \allowbreak {+} Y_1 Y_k) v,
\end{align*} 
which hence  implies
$\sum_{k=n_0+1}^{n} iJ_k D(X_1X_k \allowbreak {+} Y_1 Y_k)v=0$.
Consequently, 
one can finally conclude that $D(i\tilde{H}^{(1)}) v- \sum_{k=n_0+1}^{n}
2iJD(X_1X_k \allowbreak {+} Y_1 Y_k)v= D(i X_1)v = 0$. 
\end{proof}

\bibliographystyle{apsrev4-1}


%

\end{document}